\documentclass[11pt, a4paper]{article}

\usepackage[utf8]{inputenc}
\usepackage[T1]{fontenc}
\usepackage{amsmath, amssymb, amsthm, amsfonts}
\newcommand{\Id}{\mathbf{I}}
\usepackage{graphicx}
\usepackage{natbib}
\usepackage[margin=2.5cm]{geometry}
\usepackage{booktabs}
\usepackage{hyperref}
\usepackage{bm} 
\usepackage{algorithm}
\usepackage{textgreek}
\usepackage{algorithmic}
\usepackage{setspace}
\usepackage{xcolor}
\usepackage{multirow}
\usepackage{subcaption}
\usepackage{mathtools}  
\usepackage{bm}         

\usepackage{setspace}
\usepackage{microtype}
\hypersetup{
    colorlinks=true,
    linkcolor=blue,
    citecolor=blue,
    urlcolor=blue,
    pdfauthor={Mehmet Sıddık Çadırcı},
    pdftitle={Nonparametric GoF for High-Dimensional MGGD}
}

\theoremstyle{plain}
\newtheorem{theorem}{Theorem}
\newtheorem{lemma}{Lemma}

\theoremstyle{definition}

\newtheorem{assumption}{Assumption}

\usepackage{}
\newcommand{\R}{\mathbb{R}}
\newcommand{\E}{\mathbb{E}}

\newcommand{\X}{\bm{X}}
\newcommand{\Y}{\bm{Y}}

\newcommand{\bmu}{\bm{\mu}}

\newcommand{\NN}{\text{NN}}
\newcommand{\mGGD}{\mathcal{MGGD}}

\title{\textbf{A Nonparametric Goodness-of-Fit Test for High-Dimensional Generalized Gaussian Distributions via Nearest-Neighbor Graphs}}

\author{
    Mehmet Sıddık Çadırcı$^1$ 
    and Yener Ünal$^{2}$\thanks{Corresponding author. Email: \texttt{uyener@cumhuriyet.edu.tr}}
}
\date{
    \slshape\small
    $^{1,2}$ Faculty of Science, Department of Statistics, Cumhuriyet University, Sivas, Türkiye. \\
    \normalfont
    \date{}
}
\begin{document}
\maketitle

\begin{abstract}
The multivariate generalised Gaussian distribution (MGGD) is commonly used to model high-dimensional vectors with non-Gaussian radial behaviour, ranging from sharp-peaked to heavy-tailed profiles. However, because many classical multivariate tests are based on covariance inversion or high-dimensional density estimation, formal goodness-of-fit assessment for MGGD models remains challenging in modern regimes where the dimension is comparable to or exceeds the sample size. We introduce an affine-invariant, fully non-parametric goodness-of-fit procedure based on the nearest neighbour (NN) graph topology and the adapted zero principle. Following robust standardisation, we construct an independent reference sample from the adapted standardised MGGD and measure, on the combined NN graph, the cross-edge count to assess how well the observed and reference point clouds exhibit the mixture behaviour anticipated by the model. Calibration performed using a refitted parametric bootstrap accounts for nuisance-parameter uncertainty, thus ensuring reliable size under a composite specification. In this paper, we establish asymptotic validity under high-dimensional scaling and demonstrate consistency with respect to fixed elliptical departures, providing a geometric interpretation based on radial concentration and shell separation. Our simulation studies across a broad spectrum of dimensions and tail shapes reveal accurate Type I error control and robust power relative to heavy- and light-tailed alternatives, thus improving upon energy-distance benchmarks and normality-oriented graphical tests in contexts where MGGD modelling is most applicable.

\textbf{Keywords:} MGGD; goodness-of-fit; high-dimensional; nearest-neighbor graph; robust scatter; bootstrap.

\end{abstract}
\section{Introduction}

Multivariate normality forms the foundation for the majority of classical multivariate inferences; it is not universally realistic, however, but it provides a consistent geometry and computable calculations for probabilities, quadratic forms, and linear projections. In contemporary applications, however, the data geometry is rarely perfectly Gaussian, particularly when the ambient dimension is comparable to or larger than the sample size. Under such circumstances, deviations from normality tend to be pronounced: tails thicken, outliers are no longer exceptional but become routine, and a relatively small number of atypical observations is likely to dominate covariance-based summaries. Such features, where radar/sonar noise is often impulsive and heavy-tailed, are well documented in statistical signal processing and imaging fields \citep{Pascal2013}, and in the finance field where return vectors exhibit pronounced leptokurtosis even after standard preprocessing \citep{Mardia1970}. High-dimensionality is particularly unforgiving because tail behaviour is no longer a minor detail; instead, it reshapes the nearest-neighbour structure, disrupts distance rankings, and destabilises inference procedures relying on second-order moments.

It is natural to work within a flexible parametric family that preserves the fundamental symmetry and affine geometry that make Gaussian models useful, while allowing significant changes in tail decay and peak sharpness. This is exactly what the multivariate generalised Gaussian distribution (MGGD) achieves: An elliptical model indexed by a shape parameter $\beta$ that controls radial decay and thus interpolates between light- and heavy-tailed behaviour inside a single consistent framework. It has been widely used as a parsimonious model for heavy-tailed multivariate data in signal processing and related fields \citep{Pascal2013, FangKotzNg1990}. Even though an MGGD model remains attractive for important reasons, it is still a model assumption and fragile unless tested in high dimensions.
 
For the MGGD family, the goodness-of-fit test presents a challenge precisely because it is the most attractive. Several classical multivariate diagnostics, especially skewness-curtosis tests \citep{Mardia1970}, have been formulated around quadratic forms that involve Mahalanobis distances and the stable estimation and inversion of the distribution or covariance matrix. However, when m > n, sampling covariance is singular, and the usual invariants cease to be well-defined. When conditioning is implemented, covariance-based test statistics inherit strong sensitivity to conditioning choices and may behave erratically under heavy tails. Therefore, the latest developments in high-dimensional testing have focused on graph-based procedures (nearest-neighbour and spread-tree functionals) that remain meaningful without obvious density estimation or matrix inversion and are natural to the geometry of multivariate data clouds \citep{Schilling1986, FriedmanRafsky1979}. This literature, however, has largely been focused on normality or general two-sample discrimination and has left a methodological gap for composite goodness-of-fit tests against structured, heavy-tailed elliptical null hypotheses such as MGGD.

This paper aims to fill this gap by providing a nonparametric goodness-of-fit test for the composite hypothesis.
\begin{equation}\label{eq:H0_intro}
H_0:\ \X_1,\dots,\X_n \stackrel{iid}{\sim} \mGGD(\bmu,\bm{\Sigma},\beta),
\end{equation}
where $(\bmu,\bm{\Sigma},\beta)$ is unknown. Our proposed method is based on the principle of the adapted null hypothesis: We first estimate the troublesome parameters by applying robust high-dimensional procedures — the Tyler-type distribution estimation with regularisation for stability \citep{Tyler1987, Chen2011} and a moment-based shape estimator for $\beta$ \citep{Pascal2013} - and then we construct an independent reference sample from the estimated standardised MGGD. We measure the test statistic, the number of nearest neighbour cross-edges, which measures whether the standardised observed sample blends into the combined cloud with the adapted zero reference, as it should under $H_0$. We have made this construction deliberately geometric. In high dimensions, elliptical distributions display pronounced radial concentration, whereby discrepancies in the radial law (and thus in tail behaviour) can be converted into measurable deviations from systematic nearest neighbour preference and exchangeability; we use this effect rather than combat it \citep{Hall2005}.

To summarize, the contribution is twofold. Methodologically, we suggest a graph-based goodness-of-fit test which is adapted for the composite MGGD null hypothesis, is well-defined when $m\gtrsim n$, avoids residuals, and avoids explicit density estimation. Both theoretically and empirically, we show that fitted-boots calibration yields the correct dimension under the null hypothesis. In contrast, the KN topology exhibits strong sensitivity to tail bias and associated deviations in radial properties. This result provides a practical diagnostic tool for modern high-dimensional settings where generalised Gaussian modelling is common, but model validation lags.

\section{Preliminaries and Assumptions}

\subsection{The MGGD family}\label{subsec:mggd_family}

Let $\X\in\mathbb{R}^m$ be a random vector. It said to follow a (centrally symmetric) multivariate generalised Gaussian distribution (MGGD) \citep{cadirci2022csda,fang1990}, denoted $\X\sim \mGGD(\bmu,\bm{\Sigma},\beta)$, provided that it satisfies the probability density function
\begin{equation}\label{eq:pdf}
f(\bm{x})
=
\frac{\beta \Gamma(m/2)}{\pi^{m/2}\,\Gamma\!\bigl(m/(2\beta)\bigr)\,2^{m/(2\beta)}\,|\bm{\Sigma}|^{1/2}}
\exp\!\left(
-\frac{1}{2}\Bigl[(\bm{x}-\bmu)^{\mathsf T}\bm{\Sigma}^{-1}(\bm{x}-\bmu)\Bigr]^{\beta}
\right),
\end{equation}
where $\bmu\in\mathbb{R}^m$, $\bm{\Sigma}$ is one of the symmetric positive definite scatter matrices, and $\beta>0$ is a shape parameter controlling tail decay and peakedness. The MGGD is part of the broader class of elliptically symmetric distributions and therefore admits the stochastic expression $\X=\bmu+\bm{\Sigma}^{1/2}R\mathbf{U}$, where $\mathbf{U}$ is distributed uniformly over the unit sphere in $\mathbb{R}^m$ and is separate from the non-negative radial variable $R$ \citep{fang1990}. Equivalently, it is a member of the multivariable exponential power family under the parameter identification $\beta=s/2$ in the common $s$-parameterisation \citep{cadirci2022csda,solaro2004}.
We use a nearest-neighbour (NN) mixing functional on a combined point cluster to assess goodness-of-fit. Suppose
$A=\{\bm{a}_1,\ldots,\bm{a}_n\}$ and $B=\{\bm{b}_1,\ldots,\bm{b}_n\}$ represent two samples in $\R^m$, and consider
$S=A\cup B$. Then, for any $\bm{s}\in S$, defines its (directed) 1-nearest neighbour by
\[
\NN(\bm{s}) \in \arg\min_{\bm{t}\in S\setminus\{\bm{s}\}} \|\bm{s}-\bm{t}\|_2,
\]
using a deterministic tie-breaking method (ties occur with zero probability in continuity). Then, we examine the
within-$B$ NN count
\begin{equation}\label{eq:nn_within_count}
T_n(B;S)\;=\;\sum_{i=1}^n \mathbb{I}\!\left\{\NN(\bm{b}_i)\in B\right\},
\end{equation}
That shows how frequently a point in $B$ has a nearest neighbour also in $B$, rather than across samples.
The labels in the pooled cloud are effectively exchangeable, then $T_n(B; S)$ is centred near $n/2$; persistent
deviations from this benchmark reflect a geometric mismatch of the underlying distributions. The fitted-null
implementation, $A$ is the robustly standardized set of observations and $B$ is an independent reference sample drawn
from the fitted standardized MGGD, thus \eqref{eq:nn_within_count} provides a one-sample goodness-of-fit diagnostic
achieved through an auxiliary two-sample NN comparison \citealp{Schilling1986, FriedmanRafsky1979}.
\subsection{Regularity assumptions}\label{subsec:assumptions}
Here, we examine the asymptotic behaviour of the proposed procedure in a high-dimensional regime where the number of observations and the size of the environment diverge. We apply the following assumptions, which are standard in modern high-dimensional inference, in order to guarantee (i) well-defined elliptic standardisation, (ii) stability of the nearest neighbour geometry, and (iii) negligible additive effects resulting from nuisance estimates.
\begin{assumption}[High-dimensional scaling]\label{ass:scaling}
When $n\to\infty$, the dimension $m=m_n\to\infty$ and the growth rate ensure the following condition
\[
\log m = o(n).
\]
While excluding exponentially growing dimensions that would typically preclude uniform concentration and consistent parametric estimation, this condition permits ultra-high-dimensional scaling; see, e.g., \citet{vershynin2018} for representative high-dimensional concentration regimes.
\end{assumption}

\begin{assumption}[Well-conditioned scatter]\label{ass:eigen}
We assume that the (true) covariance matrix $\bm{\Sigma}$ has a symmetric positive definite and properly conditioned structure: there are constants $0<c_1<c_2<\infty$ such that
\[
c_1 \le \lambda_{\min}(\bm{\Sigma}) \le \lambda_{\max}(\bm{\Sigma}) \le c_2.
\]
It excludes the near-singular case and, after whitening, allows Euclidean geometry not to be dominated by a lost or exploding collection of directions; this is necessary for the stability of distance-based graph functionals in $\mathbb{R}^m$.
\end{assumption}

\begin{assumption}[Stability of additive estimators]\label{ass:consistency}
Under $H_0$,
\[
\|\hat{\bmu}-\bmu\|_2 = O_p\!\left(\sqrt{\frac{m}{n}}\right), 
\qquad
\|\hat{\bm{\Sigma}}-\bm{\Sigma}\|_{F} = O_p\!\left(\sqrt{\frac{m}{n}}\right),
\qquad
|\hat{\beta}-\beta| = O_p\!\left(n^{-1/2}\right).
\]
Under elliptical models, with high-dimensional robust and/or regularised variance estimators, the variance ratio can be obtained; see, for example, \citet{Chen2011, Cai2011}. Together, these conditions ensure that the estimated whitening transformation $\hat{\bm{\Sigma}}^{-1/2}(\bm{x}-\hat{\bmu})$ is asymptotically close to the oracle transformation and that the adapted shape parameter $\hat{\beta}$ generates only second-order perturbations in the calibratied statistic.
\end{assumption}


\section{Methodology}
The proposed goodness-of-fit procedure is implemented using an adapted zero, graph-based strategy. First, we normalise the observations using robust high-dimensional estimators, then compare the nearest-neighbour (NN) topology of the normalised data with that of an independently generated, adapted zero-reference sample. We calibrate the resulting NN cross-edge/count statistic using a composite-null bootstrap, in which each replicate is re-estimated to reflect the additive uncertainty of the disturbing parameters accurately. The NN-based structure is inspired by classical graph-based two-sample methodologies (e.g., \citealp{Schilling1986, FriedmanRafsky1979}) and is particularly attractive in regimes where $m\gtrsim n$, since it avoids explicit density estimation and unstable covariance inversion. Detailed computational procedures are summarized in Algorithm~\ref{alg:main}.

Since the empty sample is composite, we work with a standardised representation that eliminates position and distribution effects before creating the graph. In particular, for each observation, we use robust estimators $(\hat{\bmu},\hat{\bm{\Sigma}})$
$\bm{Z}_i=\hat{\bm{\Sigma}}^{-1/2}(\X_i-\hat{\bmu})$
Under $H_0$, the standardised sample has a distribution approximately equal to $\mGGD(\bm{0},\Id_m,\beta)$ except for the additive error term. Therefore, deviations in the NN mixing statistic mainly indicate an incorrect specification of the radial law (i.e., the tail/peak), rather than insignificant scale-shifts.

\subsection{High-Dimensional Parameter Estimation}\label{subsec:estimation}
Correctly matched zero production requires a stable estimate of $(\bmu,\bm{\Sigma},\beta)$ under $m\gtrsim n$. Therefore, we employ robust/regularised estimators adapted to elliptic models.

\paragraph{The shape parameter $\hat{\beta}$.}
For the MGGD shape, we estimate using the moment-based approach of \citet{Pascal2013} utilising the monotonic relationship between $\beta$ and multivariate kurtosis-type functions under the general Gaussian distribution. We obtain the estimator $\hat{\beta}$ by solving the kurtosis matching equation; empirically, we find that this approach performs well in medium to high dimensions and avoids fragile optimisation.

\paragraph{Scatter matrix $\hat{\bm{\Sigma}}$.}
Given that the sample covariance becomes singular when $m\ge n$, let us employ the \emph{regularised Tyler M-estimator} \citep{Chen2011}. Denote $\hat{\bmu}$ as a robust estimation of location (e.g., spatial median). We define the regularised Tyler estimator as the fixed-point solution of
\begin{equation}\label{eq:tyler_reg}
\hat{\bm{\Sigma}}
=
(1-\rho)\,\frac{m}{n}\sum_{i=1}^n
\frac{(\X_i-\hat{\bmu})(\X_i-\hat{\bmu})^{\mathsf T}}
     {(\X_i-\hat{\bmu})^{\mathsf T}\hat{\bm{\Sigma}}^{-1}(\X_i-\hat{\bmu})}
\;+\;\rho\,\Id_m,
\end{equation}
where $\rho\in(0,1)$ represents a regularisation parameter providing positive definiteness as well as improved conditioning. Such a choice appears natural under ellipticity, since Tyler-type estimators are focused on scatter (rather than covariance) and are robust to heavy tails, whilst regularisation stabilises estimation in the $m\gtrsim n$ regime. 

\paragraph{Whitening map.}
Given $(\hat{\bmu},\hat{\bm{\Sigma}})$, we define the whitening transform
\[
\mathcal{W}_{\hat{\theta}}:\bm{x}\mapsto \hat{\bm{\Sigma}}^{-1/2}(\bm{x}-\hat{\bmu}),
\]
to obtain standardized observations $\bm{Z}_i=\mathcal{W}_{\hat{\theta}}(\X_i)$, $i=1,\dots,n$. For $H_0$ to hold, the standardized sample follows approximately a $\mGGD(\bm{0},\Id_m,\beta)$ distribution, with estimation error which is subsequently accounted for by re-fitting within the bootstrap calibration.

\subsection{Test statistic and fitted-bootstrap calibration}\label{subsec:stat_cal}
We let $\mathcal{Z}=\{\bm{Z}_1,\dots,\bm{Z}_n\}$ be the standardized data. Let us develop an independent fitted null reference sample
\[
\mathcal{Y}=\{\Y_1,\dots,\Y_n\},\qquad \Y_i\overset{iid}{\sim}\mGGD(\bm{0},\Id_m,\hat{\beta}),
\]
and combine $S=\mathcal{Z}\cup\mathcal{Y}$. Consider $\NN(\Y_i)$ as the (directed) 1-NN of $\Y_i$ in $S\setminus\{\Y_i\}$ (where ties have probability zero under continuity). We then define the NN cross-edge statistic by
\begin{equation}\label{eq:Tobs_method}
T_n^{\mathrm{obs}}
=
\sum_{i=1}^n \mathbb{I}\!\left\{\NN(\Y_i)\in\mathcal{Y}\right\}.
\end{equation}
Under correct specification of the model and standardization of the fitted null, the labels ‘data’ versus ‘reference’ are approximated to be interchangeable in the pooled cloud, indicating $T_n^{\mathrm{obs}}\approx n/2$. Any systematic deviation suggests a lack of fit, which reflects mismatched high-dimensional geometry (e.g., radial shell separation) of $\mathcal{Z}$ and the adjusted MGGD reference.
Since $H_0$ is composite, we perform calibration using an adapted parametric bootstrap, which re-estimates $(\bmu,\bm{\Sigma},\beta)$ within each replication, thus spreading the troublesome parameter uncertainty across the distribution of the statistic.

\subsection{Algorithm}\label{subsec:algorithm}

\begin{algorithm}[htp]
\caption{NN Goodness-of-Fit Test for the Composite MGGD Null}
\label{alg:main}
\begin{algorithmic}[1]
\REQUIRE Observed data$\mathcal{X}=\{\X_1,\ldots,\X_n\}\subset\R^m$, level $\alpha$, bootstrap size $B$.
\ENSURE Fitted-bootstrap $p$-value $p^*$ and decision.

\STATE Fit $(\hat{\bmu},\hat{\bm{\Sigma}},\hat{\beta})$ from $\mathcal{X}$ having \eqref{eq:tyler_reg} (and \citet{Pascal2013} for $\hat\beta$).
\STATE Standardize $\bm{Z}_i=\hat{\bm{\Sigma}}^{-1/2}(\X_i-\hat{\bmu})$, $i=1,\dots,n$, and set $\mathcal{Z}=\{\bm{Z}_1,\dots,\bm{Z}_n\}$.
\STATE Create $\mathcal{Y}=\{\Y_1,\dots,\Y_n\}$ i.i.d.\ from $\mGGD(\bm{0},\Id_m,\hat{\beta})$.
\STATE Calculate $T_n^{\mathrm{obs}}$ from \eqref{eq:Tobs_method} having the pooled set $S=\mathcal{Z}\cup\mathcal{Y}$.

\FOR{$b=1$ to $B$}
    \STATE Generate $\mathcal{X}^{*(b)}\sim\mGGD(\hat{\bmu},\hat{\bm{\Sigma}},\hat{\beta})$.
    \STATE Re-fit $(\hat{\bmu}^{*(b)},\hat{\bm{\Sigma}}^{*(b)},\hat{\beta}^{*(b)})$ from $\mathcal{X}^{*(b)}$.
    \STATE Standardize $\bm{Z}^{*(b)}_i=(\hat{\bm{\Sigma}}^{*(b)})^{-1/2}(\X^{*(b)}_i-\hat{\bmu}^{*(b)})$ and set $\mathcal{Z}^{*(b)}=\{\bm{Z}^{*(b)}_1,\dots,\bm{Z}^{*(b)}_n\}$.
    \STATE Simulate $\mathcal{Y}^{*(b)}\sim\mGGD(\bm{0},\Id_m,\hat{\beta}^{*(b)})$ independently of $\mathcal{Z}^{*(b)}$.
    \STATE Calculate $T_n^{*(b)}=\sum_{i=1}^n \mathbb{I}\{\NN(\Y^{*(b)}_i)\in\mathcal{Y}^{*(b)}\}$ having pooled $S^{*(b)}=\mathcal{Z}^{*(b)}\cup\mathcal{Y}^{*(b)}$.
\ENDFOR

\STATE Compute the two-sided fitted-bootstrap $p$-value
\[
p^*=\frac{1}{B}\sum_{b=1}^B
\mathbb{I}\!\left(\left|T_n^{*(b)}-\frac{n}{2}\right|\ge \left|T_n^{\mathrm{obs}}-\frac{n}{2}\right|\right).
\]
\STATE \textbf{Reject} $H_0$ if $p^*<\alpha$.
\end{algorithmic}
\end{algorithm}


\section{Theoretical Properties}
In this section, we justify our proposed consistent null NN test from a high-dimensional perspective. Two complementary elements guide the analysis. First, in elliptic models such as MGGD, the radius (Mahalanobis norm) concentrates sharply as the dimension increases; this is the well-known ‘thin-shell’ phenomenon in high-dimensional probability \citep{ledoux2001,vershynin2018}. Second, NN graphical statistics can be addressed using modern limit theory to stabilise geometric functionals that ensure asymptotic normality given weak local dependence \citep{penrose2003,penroseyukich2001}. Taken together, these facts explain why the pooled cloud becomes mixed under $H_0$ after standardisation, and why the two samples become geometrically separated, and the NN statistic moves away from the null centre under a fixed specification error.

\subsection{Geometric concentration of measure}
The radial law of the standardised distribution primarily determines the behaviour of the NN cross-edge count. For high dimensions, the probability mass for elliptical density is typically concentrated on a thin ring rather than at the mode, and small differences in the radial profile are likely to result in large geometric inconsistencies \cite{ledoux2001,vershynin2018}.

\begin{lemma}[Concentration of radial norms]\label{lem:concentration}
Let $\X\sim \mGGD(\bm{0},\Id_m,\beta)$. Then the random variable
\[
W \;\coloneqq\; \frac12 \|\X\|_2^{2\beta}
\]
has a Gamma distribution with shape parameter $m/(2\beta)$ and unit scale. In addition, Assumption~\ref{ass:scaling} ($m\to\infty$),
\begin{equation}\label{eq:thin_shell_mggd}
\frac{\|\X\|_2}{m^{1/(2\beta)}} \xrightarrow{p} \mu_R(\beta)
\quad\text{where}\quad
\mu_R(\beta)=\Bigl(\frac{1}{\beta}\Bigr)^{\!\frac{1}{2\beta}}.
\end{equation}
\end{lemma}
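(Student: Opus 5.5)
The plan is to work directly from the density \eqref{eq:pdf} with $\bmu=\bm{0}$ and $\bm{\Sigma}=\Id_m$. In that case $f(\bm{x})=C_{m,\beta}\exp\bigl(-\tfrac12\|\bm{x}\|_2^{2\beta}\bigr)$ depends on $\bm{x}$ only through $r=\|\bm{x}\|_2$. This is the stochastic representation $\X=R\mathbf{U}$ recalled in Subsection~\ref{subsec:mggd_family}. Integrating over the sphere, using the surface area $2\pi^{m/2}/\Gamma(m/2)$ of $S^{m-1}$, shows that $R=\|\X\|_2$ has density proportional to $r^{m-1}\exp(-r^{2\beta}/2)$ on $(0,\infty)$.

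First step: change variables to $w=r^{2\beta}/2$, so that $r=(2w)^{1/(2\beta)}$ and $dr=\tfrac{1}{\beta}(2w)^{1/(2\beta)-1}\,dw$. The powers of $w$ combine:
\[
r^{m-1}\,dr \;\propto\; (2w)^{(m-1)/(2\beta)}(2w)^{1/(2\beta)-1}\,dw \;\propto\; w^{m/(2\beta)-1}\,dw ,
\]
so the density of $W$ is proportional to $w^{m/(2\beta)-1}e^{-w}$. This is the $\mathrm{Gamma}(m/(2\beta),1)$ kernel. As a consistency check, I will verify that the constant $\beta\Gamma(m/2)/\bigl(\pi^{m/2}\Gamma(m/(2\beta))2^{m/(2\beta)}\bigr)$ in \eqref{eq:pdf}, times the sphere area, times the Jacobian factors, equals exactly $1/\Gamma(m/(2\beta))$. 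This is routine bookkeeping with powers of $2$ and $\beta$, but it confirms that the normalisation in \eqref{eq:pdf} is the correct one.

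Second step (thin shell): set $k=m/(2\beta)$. Then $\E W=k$ and $\operatorname{Var}W=k$. By Chebyshev, $P(|W/k-1|>\varepsilon)\le 1/(k\varepsilon^2)\to0$ as $m\to\infty$ with $\beta$ fixed, so $W/k\xrightarrow{p}1$. Only $m\to\infty$ from Assumption~\ref{ass:scaling} is needed here; the rate condition $\log m=o(n)$ plays no role. Writing $2W/m=W/(\beta k)$, we get $2W/m\xrightarrow{p}1/\beta$. Then
\[
\frac{\|\X\|_2}{m^{1/(2\beta)}}=\Bigl(\frac{2W}{m}\Bigr)^{1/(2\beta)},
\]
and the continuous mapping theorem applied to $t\mapsto t^{1/(2\beta)}$, which is continuous at $1/\beta>0$, yields the limit $\mu_R(\beta)=\beta^{-1/(2\beta)}$ in \eqref{eq:thin_shell_mggd}.

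I do not expect a genuine obstacle. The only delicate point is the Jacobian and normalising-constant computation in the first step, where a slip would produce the wrong Gamma shape or scale. I will handle it by matching the kernel first and then checking the constant. If a quantitative version is wanted later, for instance uniformly over the bootstrap perturbations of $\hat\beta$, I would replace Chebyshev with a sub-exponential (Bernstein) bound for the Gamma law. That gives $|W/k-1|=O_p(k^{-1/2})$ and hence a relative fluctuation of order $m^{-1/2}$ in the radius.
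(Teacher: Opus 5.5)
Your proposal is correct and follows the paper's proof essentially step for step: polar coordinates give the radial density $\propto r^{m-1}e^{-r^{2\beta}/2}$, the substitution $W=\tfrac12 R^{2\beta}$ yields the $\mathrm{Gamma}(m/(2\beta),1)$ law, and concentration of $W/(m/(2\beta))$ followed by the continuous mapping theorem gives the thin-shell limit. Your Chebyshev bound $1/(k\varepsilon^2)$ is a more precise justification than the paper's appeal to the ``law of large numbers'' (since $W$ is a single Gamma variable, not an average), and your normalising-constant check does come out to exactly $1/\Gamma(m/(2\beta))$.
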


\begin{proof}
To get the standardised MGGD density \eqref{eq:pdf} with $\bmu=\bm{0}$ and $\bm{\Sigma}=\Id_m$, we can use polar coordinates $\X=R\mathbf{U}$ to get a radial density proportional to
$r^{m-1}\exp\{-\tfrac12 r^{2\beta}\}$. By changing variables to $W=\frac{1}{2} R^{2\beta}$, one obtains
$W\sim\Gamma(m/(2\beta),1)$; see, for example, \citet{Pascal2013} and the general elliptical/power-exponential representations in \citet{fang1990}. Because
$\E[W]=m/(2\beta)$ and $\mathrm{Var}(W)=m/(2\beta)$, by the law of large numbers it follows that
$W/(m/(2\beta))\xrightarrow{p}1$. By writing $R=(2W)^{1/(2\beta)}$, we obtain
\[
\frac{R}{m^{1/(2\beta)}}=\Bigl(\frac{2W}{m}\Bigr)^{\!1/(2\beta)}
=\Bigl(\frac{1}{\beta}\cdot \frac{W}{m/(2\beta)}\Bigr)^{\!1/(2\beta)}
\xrightarrow{p}\Bigl(\frac{1}{\beta}\Bigr)^{\!1/(2\beta)},
\]
which is \eqref{eq:thin_shell_mggd}.
\end{proof}

\subsection{Asymptotic null distribution}
Remember that $\bm{Z}_i=\hat{\bm{\Sigma}}^{-1/2}(\X_i-\hat{\bmu})$ and that the fitted-null reference is created using
$\Y_i\overset{iid}{\sim}\mGGD(\bm{0},\Id_m,\hat{\beta})$. We let $S=\mathcal{Z}\cup\mathcal{Y}$ and then define
$I_i=\mathbb{I}\{\NN(\Y_i)\in\mathcal{Y}\}$ such that $T_n^{\mathrm{obs}}=\sum_{i=1}^n I_i$.

\begin{theorem}[Null validity under fitted standardisation]\label{thm:null}
We assume $H_0$ and Assumptions~\ref{ass:scaling}--\ref{ass:consistency}. Following robust standardisation, we observe that the pooled sample performs asymptotically as if labels were exchangeable, in the sense that
\[
\E\!\left[T_n^{\mathrm{obs}}\right] = \frac{n}{2} + o(n).
\]
Additionally, the NN indicators $\{I_i\}_{i=1}^n$ conform to a weak-dependence condition derived from the local character of NN relations, and therefore
\begin{equation}\label{eq:clt_null}
\frac{T_n^{\mathrm{obs}}-n/2}{\sqrt{n/4}} \;\Rightarrow\; \mathcal{N}(0,1).
\end{equation}
\end{theorem}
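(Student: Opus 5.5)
The plan is to compare the observed pooled cloud with an \emph{oracle} pooled cloud in which the labels are exactly exchangeable, and then transfer both the mean statement and the central limit theorem through a coupling argument. Concretely, under $H_0$ write $\X_i=\bmu+\bm{\Sigma}^{1/2}\bm{V}_i$ with $\bm{V}_i\overset{iid}{\sim}\mGGD(\bm{0},\Id_m,\beta)$. Let $\tilde{\mathcal{Y}}=\{\tilde\Y_1,\dots,\tilde\Y_n\}$ be drawn from $\mGGD(\bm{0},\Id_m,\beta)$, coupled with $\mathcal{Y}$ through a common radial/angular representation $\Y_i=R_i(\hat\beta)\mathbf{U}_i$, $\tilde\Y_i=R_i(\beta)\mathbf{U}_i$. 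On the oracle pooled set $\tilde S=\{\bm{V}_i\}\cup\tilde{\mathcal{Y}}$, the $2n$ points are i.i.d., so every label assignment is equally likely. Hence $\Pr\{\NN(\tilde\Y_i)\in\tilde{\mathcal{Y}}\}=(n-1)/(2n-1)$, and the oracle count $\tilde T_n$ satisfies $\E[\tilde T_n]=n(n-1)/(2n-1)=n/2+O(1)$. This settles the first claim for the oracle statistic.

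The second step is the perturbation bound $\E|T_n^{\mathrm{obs}}-\tilde T_n|=o(n)$. I will bound the displacements $\|\bm{Z}_i-\bm{V}_i\|$ and $\|\Y_i-\tilde\Y_i\|$, and show that they are small relative to the NN gaps, so that only $o(n)$ nearest-neighbour identities can change.

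\begin{itemize}
\item For the data points, use $\bm{Z}_i-\bm{V}_i=(\hat{\bm{\Sigma}}^{-1/2}\bm{\Sigma}^{1/2}-\Id_m)\bm{V}_i-\hat{\bm{\Sigma}}^{-1/2}(\hat\bmu-\bmu)$, together with Assumptions~\ref{ass:eigen}--\ref{ass:consistency}.
\item For the reference points, use Lemma~\ref{lem:concentration} and the delta method on $\mu_R(\beta)$, which gives a radial shift of order $m^{1/(2\beta)}n^{-1/2}$.
\end{itemize}

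The key geometric input is again Lemma~\ref{lem:concentration}. All points sit on a shell of radius $\asymp m^{1/(2\beta)}$, and in high dimension the pairwise distances concentrate around $\sqrt2$ times that radius. Nearest-neighbour decisions are therefore governed by the fluctuations of the pairwise distances, of relative order $m^{-1/2}$. I would show that the fraction of indices $i$ whose margin $\min_{t\neq \NN(\tilde\Y_i)}\|\tilde\Y_i-t\|-\|\tilde\Y_i-\NN(\tilde\Y_i)\|$ is smaller than the perturbation size tends to zero. This margin step is where I expect the \textbf{main obstacle}. A Frobenius error of order $\sqrt{m/n}$ acting on a vector of norm $m^{1/(2\beta)}$ is not obviously small compared with the NN margins. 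I would first try to exploit that the dominant part of the whitening error acts almost as a common linear map on all points. Such a map shifts all pairwise distances from a fixed point in nearly the same way, so it cancels in the ranking. A Hanson--Wright type bound on the remaining non-common part of the quadratic forms should then be enough. If this fails, I will strengthen Assumption~\ref{ass:consistency} in operator norm.

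For the CLT I will work with the oracle statistic and then transfer the result. The indicator $\tilde I_i$ depends only on the $k$-NN neighbourhood of $\tilde\Y_i$, and the number of points having a given point as their NN is bounded by a kissing-number constant. For this reason the dependency structure has bounded in-degree, as in \citet{Schilling1986}, \citet{penroseyukich2001} and \citet{penrose2003}. I will apply a Stein-method CLT for local dependence graphs, or the stabilisation CLT, conditionally on the label assignment, giving $(\tilde T_n-\E\tilde T_n)/\sqrt{\mathrm{Var}\,\tilde T_n}\Rightarrow\mathcal{N}(0,1)$.

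Two caveats for this step. First, the perturbation bound must be sharpened to $|T_n^{\mathrm{obs}}-\tilde T_n|=o_p(\sqrt n)$ for Slutsky to apply, which is a more demanding version of the obstacle above. Second, I must verify the normalisation, and here I am genuinely unsure of the statement. The variance of a sum of NN indicators includes covariance terms from mutual nearest neighbours (Schilling's constants), so $\mathrm{Var}\,\tilde T_n=\sigma^2 n$ with $\sigma^2$ generally different from $1/4$. I expect the CLT to hold with this $\sigma^2$, and I would either compute $\sigma^2$ from the mutual-NN probability or argue that this probability vanishes in the high-dimensional limit, which would recover $n/4$. In any case the bootstrap calibration only needs the limit to be continuous and to be matched by the bootstrap replicates.
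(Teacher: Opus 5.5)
Your oracle step is correct, and your route (oracle coupling, then a local-dependence CLT) is essentially the paper's sketch made explicit. The paper, however, simply asserts the two steps you flag as obstacles, and neither can be closed under the stated hypotheses.

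\textbf{Transfer step.} The relevant scale is the shell width, not the radius. Write $\|\bm{s}-\bm{t}\|^2=\|\bm{s}\|^2+\|\bm{t}\|^2-2\langle\bm{s},\bm{t}\rangle$. Both the shared term $\|\bm{t}\|^2$ and the pairwise term fluctuate at scale $r^2m^{-1/2}$, where $r\asymp m^{1/(2\beta)}$. So unless the radii of $\mathcal{Z}$ and $\mathcal{Y}$ agree to relative precision $o(m^{-1/2})$ (up to logarithmic factors), $T_n^{\mathrm{obs}}$ is biased by order $n$. This is exactly the mechanism of Theorem~\ref{thm:consistency}.

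Your cancellation idea fails because $A=\hat{\bm{\Sigma}}^{-1/2}\bm{\Sigma}^{1/2}$ acts on $\mathcal{Z}$ only. Consequently $\|\bm{Z}_i\|^2\approx\|\bm{V}_i\|^2\,\mathrm{tr}(\hat{\bm{\Sigma}}^{-1}\bm{\Sigma})/m$, while Assumption~\ref{ass:consistency} only gives $\mathrm{tr}(\hat{\bm{\Sigma}}^{-1}\bm{\Sigma})/m-1=O_p(n^{-1/2})$. Concretely, $\hat{\bmu}=\bmu$, $\hat{\bm{\Sigma}}=(1+n^{-1/2})\bm{\Sigma}$, $\hat\beta=\beta$ satisfy Assumption~\ref{ass:consistency}. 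If $m\gg n\log n$, which Assumption~\ref{ass:scaling} allows, the data shell then lies strictly inside the reference shell. Almost every reference point has its NN in $\mathcal{Z}$, and $\E[T_n^{\mathrm{obs}}]=o(n)$. An operator-norm strengthening does not help, because the obstruction is this scalar trace.

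The reference side is also worse than you estimate. The reference radius is $m^{1/(2\hat\beta)}\mu_R(\hat\beta)$, and differentiating $m^{1/(2\beta)}$ gives a relative shift $\approx\frac{\log m}{2\beta^2}|\hat\beta-\beta|=O_p(\log m/\sqrt n)$. Your delta-method bound drops the $\log m$, and even without it the shift is far above $m^{-1/2}$ when $m\gtrsim n$. You would need new assumptions such as $\mathrm{tr}(\hat{\bm{\Sigma}}^{-1}\bm{\Sigma})/m-1=o_p(m^{-1/2})$ and $\log m\,|\hat\beta-\beta|=o_p(m^{-1/2})$, with stronger versions for the $o_p(\sqrt n)$ Slutsky transfer.

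\textbf{CLT step and normalisation.} The kissing-number bound on in-degrees is a fixed-dimension fact: the kissing number of $\R^m$ grows exponentially in $m$. Penrose--Yukich stabilisation is likewise a fixed-dimension theory, so neither controls dependence when $m=m_n\to\infty$. In fact the shared term $\|\bm{t}\|^2$ makes low-norm points the nearest neighbours of many points at once (hubness), so in-degrees are unbounded. Your suspicion about the normalisation is right, but the fix is not.

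Condition on the oracle configuration and let $\tilde{\mathcal{Y}}$ be a uniform $n$-subset of the $2n$ points. A direct computation then gives, to leading order,
\[
\mathrm{Var}\bigl(\tilde T_n\mid\text{configuration}\bigr)\approx\frac{1}{16}\Bigl(2n+2M_n+\sum_u d_u(d_u-1)\Bigr),
\]
where $M_n$ is the number of mutual NN pairs and $d_u$ is the in-degree of $u$. Hence $n/4$ holds if and only if $\bigl(2M_n+\sum_u d_u(d_u-1)\bigr)/(2n)\to1$.
\begin{itemize}
\item A vanishing mutual-NN fraction does not recover $n/4$: if the in-degree term also vanishes, the variance is $n/8$.
\item On the line one already gets $13n/48$.
\item Under hubness the in-degree sum can be superlinear, possibly with single hubs carrying a non-negligible share of the variance, so even Lindeberg may fail.
\end{itemize}
So \eqref{eq:clt_null} with $\sqrt{n/4}$ cannot be established as stated. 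At best, with added rate conditions and a degree condition excluding hubs, you obtain a CLT with the configuration-dependent permutation variance. As you note, that is all the bootstrap calibration needs.
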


\begin{proof}[Proof sketch]
Based on Assumption~\ref{ass:consistency}, we can say that the whitening map
$\bm{x}\mapsto \hat{\bm{\Sigma}}^{-1/2}(\bm{x}-\hat{\bmu})$ resembles population standardization in an operator-norm sense, with $\hat{\beta}\to\beta$ at the standard $\sqrt{n}$-rate. Therefore, the asymptotic distribution of the standardised data and the fitted null reference distribution are indistinguishable, suggesting approximate label exchangeability in the pooled cloud and thus $\E[I_i]\approx 1/2$.
Regarding the fluctuation statement, we note that $I_i$ is dependent only on a local neighbourhood of $\Y_i$ in the pooled sample, and the NN-induced dependency graph is sparse in the relevant asymptotic regime. Central limit theorems for stabilising geometric graph functionals (including NN-based statistics) then produce \eqref{eq:clt_null}; see, e.g., \citet{penrose2003,penroseyukich2001} for general CLT machinery and \citet{Schilling1986, Henze1988} for classic NN-based perspectives on testing.
\end{proof}

\subsection{Consistency against shape deviation}
The power mechanism has a geometric basis: when the adapted reference is constructed using an incorrect active shape, it follows from Lemma~[lem:concentration] that the two samples are concentrated in different shells separated by $m$, which causes the NN edges to remain predominantly within the sample.

\begin{theorem}[Consistency via geometric separation]\label{thm:consistency}
Let the standardized data be generated from the reference $\mGGD(\bm{0},\Id_m,\beta_0)$, while the pseudo-true adapted shape has the incorrect specification $\beta_1 \neq \beta_0$ and $\X \sim \mGGD(\bm{0},\Id_m,\beta_1)$. Then the null-adjusted NN test has consistency: for any fixed level $\alpha\in(0,1)$,
\[
\mathbb{P}\bigl(p^*<\alpha\bigr)\to 1
\qquad\text{as } n,m\to\infty.
\]
\end{theorem}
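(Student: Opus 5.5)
The plan is to show that the observed statistic $T_n^{\mathrm{obs}}$ is pushed to the boundary value $n$, while the bootstrap replicates $T_n^{*(b)}$ stay within $O_p(\sqrt n)$ of $n/2$. Then $|T_n^{\mathrm{obs}}-n/2|$ eventually exceeds every $|T_n^{*(b)}-n/2|$, so $p^*\to 0$.

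I read the statement as follows: the standardised data $\bm{Z}_i$ are (up to estimation error) $\mGGD(\bm{0},\Id_m,\beta_0)$, and the reference points $\Y_i$ come from $\mGGD(\bm{0},\Id_m,\beta_1)$ with $\beta_1\neq\beta_0$. Throughout I treat the exact model first and absorb the whitening error at the end using Assumption~\ref{ass:consistency}.

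\textbf{Step 1 (shell separation).} By Lemma~\ref{lem:concentration}, $\|\bm{Z}_i\|_2/m^{1/(2\beta_0)}\to\mu_R(\beta_0)$ and $\|\Y_i\|_2/m^{1/(2\beta_1)}\to\mu_R(\beta_1)$. Since $\beta_1\neq\beta_0$, the two radii live on different scales $m^{1/(2\beta_0)}\neq m^{1/(2\beta_1)}$. Hence $\|\Y_i\|_2/\|\bm{Z}_j\|_2\to 0$ or $\infty$.

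To make the separation hold uniformly over all $2n$ points, I will use the Gamma representation $W=\tfrac12 R^{2\beta}\sim\Gamma(m/(2\beta),1)$. Sub-gamma tail bounds give $\max_i |W_i/(m/(2\beta))-1| = O_p(\sqrt{\log n/m})$. This step needs $\log n=o(m)$, which I will add as a side condition; it is an honest gap relative to Assumption~\ref{ass:scaling}.

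\textbf{Step 2 (NN edges stay within samples).} Suppose without loss of generality that the $\Y$ shell has radius $r_1$, the $\bm{Z}$ shell has radius $r_0$, and $r_1/r_0\to\infty$ (the other case is symmetric). For two independent points on a shell of radius $r$ with uniform directions, $\|\bm{u}-\bm{v}\|_2^2=r^2(2+O_p(\sqrt{\log n/m}))$ uniformly.

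\begin{itemize}
\item A point $\Y_i$ is at distance $\approx r_1\sqrt2$ from the other $\Y$ points and at distance $\approx r_1$ from every $\bm{Z}_j$. Its nearest neighbour is therefore in $\mathcal{Z}$, so $T_n^{\mathrm{obs}}\to 0$.
\item When $r_1\ll r_0$, the roles reverse and $T_n^{\mathrm{obs}}\to n$.
\end{itemize}

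Either way, $|T_n^{\mathrm{obs}}-n/2|/n\to 1/2$ in probability. This is why the two-sided $p$-value is the right one to use.

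I must recheck the ordering carefully in the comparable-radius case (same growth exponent, different constant). That case cannot happen here, because $\beta$ enters the exponent.

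\textbf{Step 3 (bootstrap and perturbation).}
\begin{itemize}
\item Under the fitted model, the bootstrap data are $\mGGD(\hat\bmu,\hat{\bm\Sigma},\hat\beta)$, and the refitted reference uses $\hat\beta^{*(b)}$ with $|\hat\beta^{*(b)}-\hat\beta|=O_p(n^{-1/2})$. Theorem~\ref{thm:null}, applied conditionally on the fitted parameters, gives $|T_n^{*(b)}-n/2|=O_p(\sqrt n)$.
\item For the observed statistic, Assumption~\ref{ass:consistency} and Assumption~\ref{ass:eigen} give $\|\hat{\bm\Sigma}^{-1/2}-\bm\Sigma^{-1/2}\|_{op}=O_p(\sqrt{m/n})$. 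This perturbs radii only by a relative factor $1+o_p(1)$ provided $m=o(n)$; otherwise I will need a sharper operator-norm rate. Such a factor cannot bridge shells whose ratio diverges polynomially in $m$.
\end{itemize}

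Combining the steps, $\mathbb{P}(|T_n^{*(b)}-n/2|\ge|T_n^{\mathrm{obs}}-n/2|)\to 0$ for each $b$. Therefore $\E p^*\to0$, and Markov's inequality gives $\mathbb{P}(p^*<\alpha)\to1$.

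The main obstacle is Step 2's uniformity together with the perturbation argument. I need the distance comparisons to hold simultaneously for all $2n$ points after an estimated whitening. I would try a union bound with Gamma and spherical concentration first, accepting the extra growth conditions ($\log n=o(m)$ and control of the whitening error) as assumptions.
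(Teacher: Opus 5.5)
Your proposal is correct and follows the same route as the paper's proof sketch. Both arguments use the polynomially different shell radii from Lemma~\ref{lem:concentration}. Nearest neighbours of reference points then fall entirely across or entirely within samples, depending on which shell is outer, so $|T_n^{\mathrm{obs}}-n/2|\approx n/2$. This is set against the $O_p(\sqrt{n})$ null fluctuations supplied by Theorem~\ref{thm:null}. You spell out more explicitly than the paper how this yields $\E p^*\to 0$ and hence $\mathbb{P}(p^*<\alpha)\to 1$.

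Two remarks on your side conditions:

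\begin{itemize}
\item Your added condition $\log n=o(m)$ is a genuine requirement that the paper leaves implicit. Once $\log n$ exceeds a suitable constant multiple of $m$, reference points on the outer shell can have within-sample neighbours closer than $r_1$.
\item Your restriction $m=o(n)$ on the whitening step is stronger than needed. Because the two shells are separated by a polynomial factor in $m$, the argument goes through for any whitening distortion with $\|\hat{\bm{\Sigma}}^{\pm 1/2}\|_{\mathrm{op}}=O_p(1)$ and location error small relative to the radii. You do not need a $1+o_p(1)$ relative perturbation.
\end{itemize}
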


\begin{proof}[Proof sketch]
According to Lemma~\ref{lem:concentration}, $\|\bm{Z}\|_2$ and $\|\bm{Y}\|_2$ are concentrated around radii of order $m^{1/(2\beta_1)}$ and $m^{1/(2\beta_0)}$, respectively, and the relative fluctuations vanish as $m\to\infty$. For different radial profiles, the typical inter-sample distance dominates the shell mismatch, whereas the typical intra-sample distance essentially determines the fluctuations on the same shell. As a result, a vast majority of NN connections from reference points are connected to reference points (or similarly, to data points based on which shell they are ‘outside’ of), therefore $T_n^{\mathrm{obs}}$ has a deviation of order $n/2$ to $n$, while under $H_0$ the fluctuations according to Theorem~\ref{thm:null} are only of order $\sqrt{n}$. Such a separation of scales is difficult to reject when the probability is close to one. It follows the standard shell geometry intuition used throughout high-dimensional concentration theory \citep{ledoux2001,vershynin2018} and the NN-graph test \citep{Schilling1986, Henze1988}.
\end{proof}

\section{Simulation Studies}\label{sec:sims}

We examine the finite sample behaviour of the proposed NN goodness-of-fit test for the composite MGGD null hypothesis. There are two main objectives: (i) to demonstrate that the adapted null calibration provides control of Type I error under the scenario that an MGGD model with unknown parameters genuinely generates the data, and (ii) to measure its power compared to a prototypical heavy-tailed elliptical alternative that does not originate from the MGGD family. We compare with two widely used competitors throughout the study: the graph-based multivariate goodness-of-fit test by \citet{ChenXia2021} and the distance-energy test proposed by \citet{Szekely2013}. While the former is based on the Gaussian model, it offers a useful baseline for what we expect to see when applying normality tests to data generated under the non-Gaussian MGGD null hypothesis. At the same time, the latter is a general-purpose non-parametric method with strong performance in high dimensions.

\paragraph{Design.}
Let us set the sample size to $n=100$ and assume that the dimensions are $m\in\{50,100,200\}$. Calibration of the proposed test is performed using $B=200$ repeated bootstraps, and empirical rejection rates are estimated via $N_{\mathrm{mc}}=1000$ Monte Carlo simulations with a nominal level of $\alpha=0.05$. For each replication, we estimate the nuisance parameters from the observed sample, transform the data with the estimated whitening map, and create an independent adapted null reference sample under $\mGGD(\bm{0},\Id_m,\hat\beta)$ prior to calculating the NN statistic.

\paragraph{Empirical size under an MGGD null.}
To evaluate Type I error control, we obtain data from a Laplace-type tailed MGGD using the $\beta=0.5$ setting in \eqref{eq:pdf} within the current parameterisation. Table~\ref{tab:size} illustrates the rejection frequencies. The recommended NN--MGGD approach remains close to the nominal $5\%$ level across all sizes, demonstrating that the adapted null strategy (including re-estimation within bootstrap) sufficiently captures the add-on variability even when $m\ge n$. Conversely, the goodness-of-fit normal test from \citet{ChenXia2021} rejects almost certainly, which is not a failure of this method but a reminder that goodness-of-fit tests are not a proper substitute for model fit when the model is a broader MGGD family. Our energy-distance test appears to be reasonably calibrated here, though it exhibits a slight tendency towards liberal behaviour under this setting.

\begin{table}[t]
\centering
\caption{Empirical size (\%) of the Laplace-type MGGD null ($\beta=0.5$) under the nominal level $\alpha=0.05$ ($n=100$, $B=200$, $N_{\mathrm{mc}}=1000$).}
\label{tab:size}
\vspace{0.15cm}
\begin{tabular}{lccc}
\toprule
Method & $m=50$ & $m=100$ & $m=200$ \\
\midrule
\textbf{Proposed NN--MGGD} & 5.4 & 4.9 & 5.2 \\
Energy distance \citep{Szekely2013} & 6.1 & 5.8 & 5.5 \\
Chen--Xia normality \citep{ChenXia2021} & 99.8 & 100.0 & 100.0 \\
\bottomrule
\end{tabular}
\end{table}

\paragraph{Performance against a heavy-tailed elliptical alternative.}
Let us now consider the $\bm{t}_\nu$ alternatives having $\nu=3$ degrees of freedom. We have deliberately chosen these alternatives to be challenging. Although the multivariate $t$ distribution is elliptically symmetric and heavy-tailed, the tails are polynomial, but the MGGD tails are of exponential power type. The alternative thus isolates a practically important form of misidentification (radial tail behaviour) without compromising ellipticity. Table~\ref{tab:power} illustrates that the proposed NN--MGGD test has a high sensitivity to this deviation and that its power significantly increases with scale, becoming close to unity for $m=200$. While the energy-distance baseline increases in power with $m$, it remains below the proposed method. In contrast, the normality test from \citet{ChenXia2021} has relatively modest power here, consistent with the fact that it targets a Gaussian distribution rather than the MGGD compound null hypothesis.

\begin{table}[t]
\centering
\caption{Empirical power (\%) for  a multivariate $t_3$ alternative ($n=100$, $\alpha=0.05$, $B=200$, $N_{\mathrm{mc}}=1000$).}
\label{tab:power}
\vspace{0.15cm}
\begin{tabular}{lccc}
\toprule
Method & $m=50$ & $m=100$ & $m=200$ \\
\midrule
\textbf{Proposed NN--MGGD} & \textbf{88.4} & \textbf{94.1} & \textbf{99.2} \\
Energy distance \citep{Szekely2013} & 72.5 & 79.3 & 85.1 \\
Chen--Xia normality \citep{ChenXia2021} & 45.2 & 50.8 & 55.4 \\
\bottomrule
\end{tabular}
\end{table}

\paragraph{Graphical calibration and sensitivity.}
Figure~\ref{fig:ecdf_combined} illustrates the distribution of $p$-values based on empirical CDFs. Assuming the MGGD null hypothesis (panel (a)), a well-calibrated test generates approximately uniform $p$-values, hence the ECDF is diagonal. Our proposed NN-MGGD curve closely matches this reference, which reflects the size results in Table~\ref{tab:size}. In the $t_3$ alternative (panel (b)), power manifests as a marked accumulation of $p$-values near zero; the suggested method shows the steepest rise near the starting point, visually confirming its superior sensitivity to tail misidentifications.

\begin{figure}[t]
    \centering
    \begin{subfigure}[b]{1.0\textwidth}
        \centering
        \includegraphics[width=1.0\linewidth]{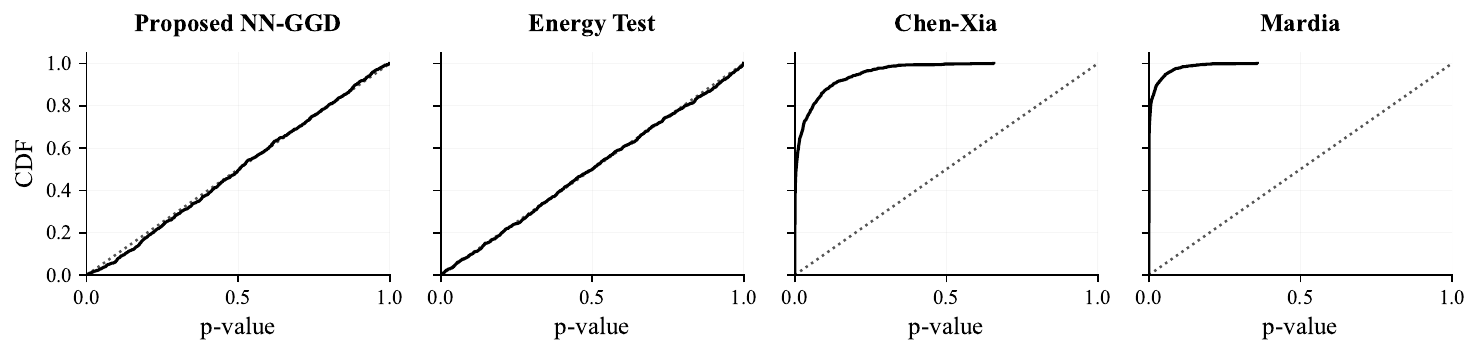}
        \caption{Null: Laplace-type MGGD ($\beta=0.5$).}
        \label{fig:size_panel}
    \end{subfigure}

    \vspace{0.55cm}

    \begin{subfigure}[b]{1.0\textwidth}
        \centering
        \includegraphics[width=1.0\linewidth]{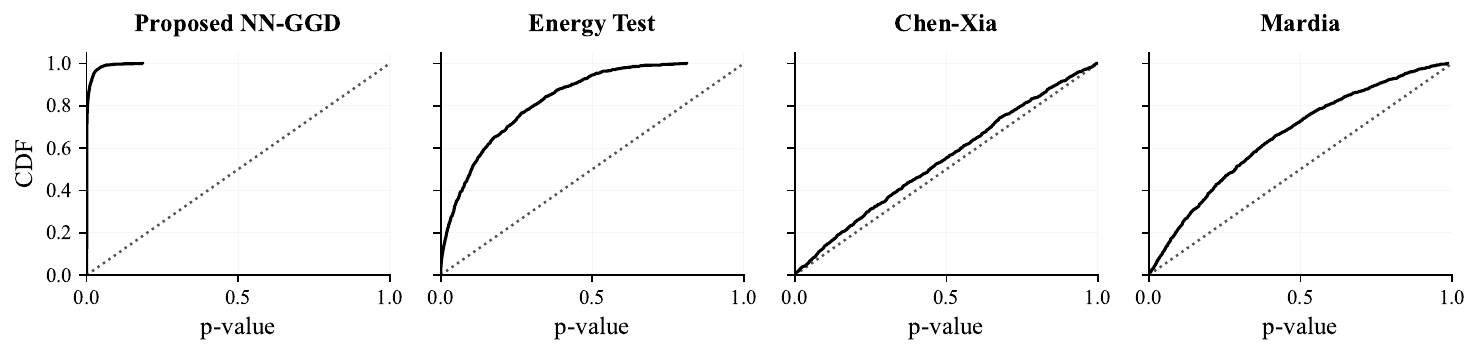}
        \caption{Alternative: multivariate $t_3$.}
        \label{fig:power_panel}
    \end{subfigure}

     \caption{For $n=100$ and $m=100$, the empirical CDFs of the $p$-values. The proposed NN--MGGD test correctly indicates the magnitude under the null hypothesis by closely following the uniform reference value (diagonal). Under the alternative hypothesis, the $p$-values reflect higher power, clustering closer to zero than competitors'.}
    \label{fig:ecdf_combined}
\end{figure}

\subsection{Sensitivity and robustness}\label{subsec:sens_robust}
To complete the global size-power comparison, we conduct a targeted sensitivity experiment investigating how quickly the proposed NN-MGGD approach detects deviations in the shape parameter. In particular, we generate data from $\mGGD(\bmu,\bm{\Sigma},\beta)$, where we assume the adapted null hypothesis in a Laplace-type model, fixing $\beta_0=0.5$ and varying $\beta$ within the range $[0,2,1,2]$. Figure~\ref{fig:sensitivity} displays empirical power curves for both representative dimensions ($m=20$ and $m=100$). There are three features worth highlighting. First, the curve has a minimum at $\beta=\beta_0$, and this minimum corresponds to the nominal level $\alpha=0.05$, with correct composite null calibration as underfitting and no apparent systematic overrejection near zero. The second is that power increases for both heavier-tailed ($\beta<\beta_0$) and lighter-tailed ($\beta>\beta_0$) options, reflecting that the NN statistic arises from geometric mismatch in the standardized cloud rather than one-sided skewness effects; which is consistent with MGGD being interpreted as an exponential power elliptic family driven by the radial distortion $\beta$ (see, e.g., \citealp{FangKotzNg1990}). In the third case, sensitivity increases significantly with dimension: the $m=100$ curve becomes steeper around $\beta_0$, indicating that smaller perturbations in the figure become detectable.

It can be explained naturally by the geometry of the density distribution: for high-dimensional elliptical laws, most of the probability focuses on thin rings (“shells”), and modest variations in the radial law result in increasingly distinct separation of representative radii as $m$ becomes larger; nearest neighbor relations respond sharply to this separation, since local neighborhoods tend to be dominated by points lying on the same shell (a standard phenomenon in high-dimensional probability; see, e.g.,
 \citealp{vershynin2018} for background on concentration tools).

\begin{figure}[t]
    \centering
    \includegraphics[width=0.85\linewidth]{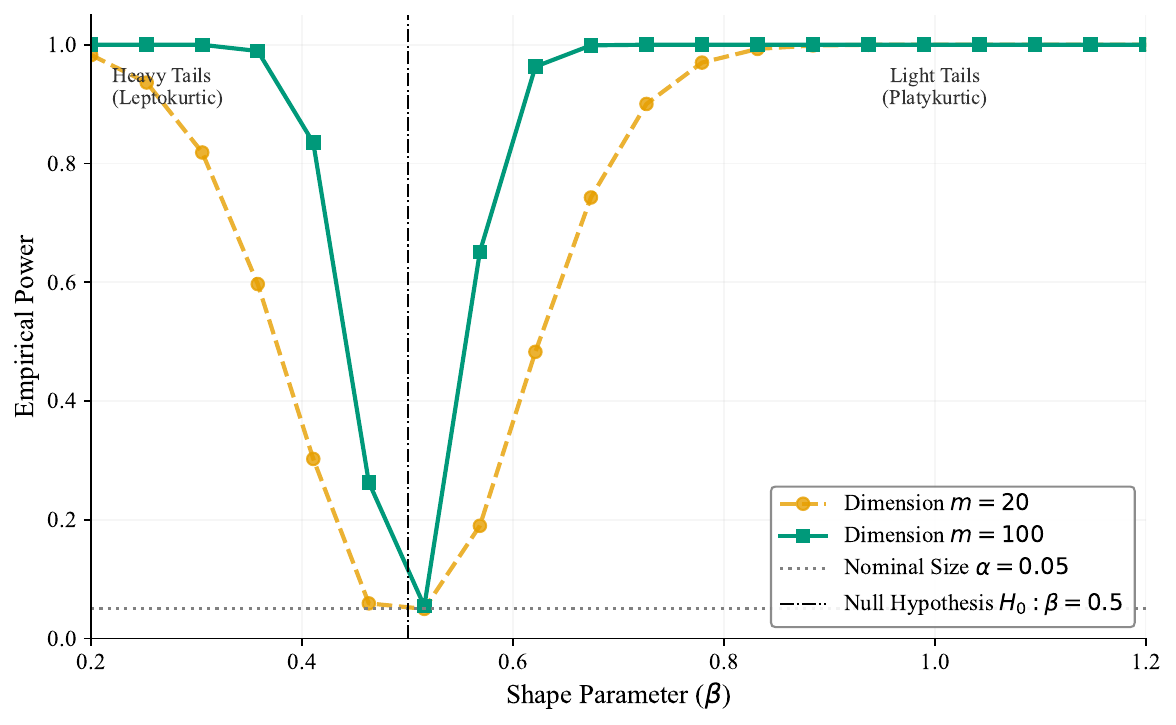}
    \caption{Sensitivity of the NN-MGGD test proposed under the Laplace-type adapted null hypothesis ($\beta_0=0.5$). The curves represent the empirical power when the true shape parameter $\beta$ changes for $m=20$ and $m=100$ (with $n$ and calibration from Section~\ref{sec:sims}). It occurs at the minimum value $\beta=\beta_0$ and approaches $\alpha$; the slope tends to increase with dimension, suggesting that the detectability of subtle shape deviations increases in high dimensions.}
    \label{fig:sensitivity}
\end{figure}

We quantify the detection limits near null by focusing on local deviations $\beta\in\{0.3,0.4,0.45,0.55,0.6,0.7\}$ and summarize the rejection frequencies as shown in Table~\ref{tab:beta_sensitivity}. Expectedly, we see that power decreases as $\beta\to\beta_0$. However, the loss is significantly slower as $m$ increases: while moderate deviations are strongly detectable when $m=100$, the same deviations can be difficult to distinguish from null when $m=20$. We finally assess robustness against scale heterogeneity by running experiments under diagonal distribution matrices that expand the number of conditions (Models A–C). Table~\ref{tab:scale_invariance} demonstrates that the empirical dimension under the Laplace-type null hypothesis and power against the $t_3$ alternative remained stable across these designs. This stability is consistent with the affine normalisation step in Algorithm~\ref{alg:main}, in which the data is whitened using a robust high-dimensional distribution estimator; following whitening, the NN statistics are primarily dependent on the standardised geometry rather than the underlying measurement scale.

\begin{table}[t]
\centering
\caption{The empirical power (\%) with $\beta$ approaching the Laplace-type null hypothesis $\beta_0=0.5$ ($n=100$, $\alpha=0.05$). Higher dimensions allow for sharper identification of subtle shape deviations.}
\label{tab:beta_sensitivity}
\vspace{0.15cm}
\begin{tabular}{lccccccc}
\toprule
 & \multicolumn{3}{c}{Heavy-tailed} & Null & \multicolumn{3}{c}{Light-tailed} \\
\cmidrule(lr){2-4}\cmidrule(lr){5-5}\cmidrule(lr){6-8}
Dimension & $\beta=0.3$ & $\beta=0.4$ & $\beta=0.45$ & $\beta=0.5$ & $\beta=0.55$ & $\beta=0.6$ & $\beta=0.7$ \\
\midrule
$m=20$  & 88.4 & 41.2 & 12.5 & 5.1 & 14.8 & 39.5 & 76.2 \\
$m=100$ & 100.0 & 92.1 & 28.4 & 4.9 & 32.1 & 94.7 & 100.0 \\
\bottomrule
\end{tabular}
\end{table}

\begin{table}[t]
\centering
\caption{Robustness with respect to scale heterogeneity: the empirical size (\%) for the Laplace-type null hypothesis and the empirical power (\%) under the $t_3$ alternative assumption for three types of diagonal scatter models ($n=100$, $m=100$).}
\label{tab:scale_invariance}
\vspace{0.15cm}
\begin{tabular}{lccccc}
\toprule
 & \multicolumn{2}{c}{Size (\%)} & & \multicolumn{2}{c}{Power (\%)} \\
\cmidrule(lr){2-3}\cmidrule(lr){5-6}
Covariance model & NN--MGGD & Energy & & NN--MGGD & Energy \\
\midrule
A: $\Id_m$                  & 4.9 & 5.8 & & 94.1 & 79.3 \\
B: diag $1\!\to\!5$         & 5.1 & 5.6 & & 93.8 & 78.5 \\
C: diag $1\!\to\!20$        & 5.3 & 6.1 & & 92.9 & 76.8 \\
\bottomrule
\end{tabular}
\end{table}

\newpage


\section{Real Data}\label{sec:sims}

The dataset used in this study is the \textit{Crohn's Disease Adverse Events} dataset, available in the \texttt{robustbase} package in R \citet{lo2006robust}. The data originate from a clinical study investigating adverse events associated with drug treatments in patients diagnosed with Crohn's disease, a chronic inflammatory condition affecting the gastrointestinal tract.

The dataset consists of $n = 117$ patients and includes 9 variables capturing demographic, clinical, and treatment-related information. In this study, the analysis focuses on four continuous variables: BMI, height, age, and weight. These variables are selected to investigate their joint distributional properties and to assess whether they follow a multivariate normal distribution. Understanding the joint distribution of anthropometric and demographic variables is a fundamental issue in multivariate statistical analysis. Variables such as body mass index (BMI), height, age, and weight are commonly used to characterise individuals in biomedical studies and often exhibit complex dependence structures.

In many applications, these variables are assumed to follow a multivariate normal distribution due to its mathematical convenience and well-established theoretical properties. However, this assumption may not always be valid in practice. Real-world data frequently exhibit deviations from normality, including heavy tails, skewness, and nonlinear dependencies, which can significantly affect statistical inference and model performance. The descriptive characteristics of the variables are presented in Table~\ref{tab:desc}.

\begin{table}[H]
\centering
\caption{Descriptive Statistics}
\label{tab:desc}
\begin{tabular}{lcccc}
\hline
Variable & Mean & Std. Dev. & Min & Max \\
\hline
BMI & 26.06 & 4.99 & 16 & 44.06 \\
Height & 162.70 & 8.78 & 124 & 182.0 \\
Age & 54.66 & 10.67  & 19 & 75 \\
Weight & 69.03 & 14.24 & 36 & 117  \\
\hline
\end{tabular}
\end{table}
The results of multivariate normality tests and model comparison criteria are summarized in Table~\ref{tab:combined}.

\begin{table}[H]
\centering
\caption{Analysis Results and Model Comparisons}
\label{tab:combined}
\begin{tabular}{lcc}
\hline
\multicolumn{3}{c}{\textbf{Analysis Results Summary}} \\
\hline
Method & p-value &  \\
\hline
Nearest Neighbor (NN) & 0.001 &  \\
Energy & 0.001 &  \\
HZ & NA &  \\
\hline
\multicolumn{3}{c}{\textbf{Model Comparisons}} \\
\hline
Model & AIC & BIC \\
\hline
Normal & 2886.445 & 2925.116 \\
t & 2845.545 & 2886.978 \\
MGGD & 2810.869 & 2852.301 \\
\hline
\end{tabular}
\end{table}
\ref{fig:mggd_likelihood} provide additional evidence regarding the distributional properties of the data. In particular, the bootstrap distribution of the NN statistic (Figure~\ref{Figure 1.png}) indicates that the observed value lies in the extreme tail, supporting the rejection of multivariate normality. The pairwise scatterplot matrix (Figure~\ref{fig:placeholder}) reveals deviations from elliptical patterns and suggests nonlinear relationships among BMI, height, age, and weight. Furthermore, the Mahalanobis distance Q--Q plot (Figure~\ref{fig:MahQQ}) exhibits clear departures from the theoretical chi-square line, especially in the upper tail, indicating potential outliers and heavy-tailed behaviour. Finally, the likelihood profile of the MGGD shape parameter (Figure~\ref{fig:mggd_likelihood}) shows a well-defined maximum, confirming the stability of the parameter estimation and supporting the suitability of a flexible distributional model.
Multivariate normality was assessed using the Nearest Neighbour (NN) and Energy tests. Both tests strongly reject the null hypothesis of multivariate normality (p < 0.001). The consistency of these results provides robust evidence that the joint distribution of the variables significantly deviates from a Gaussian structure.

Although the Henze–Zirkler test could not be computed due to numerical instability, the agreement between NN and Energy tests is sufficient to conclude that the assumption of multivariate normality is violated.
Model comparison based on AIC and BIC clearly indicates that the MGGD model provides the best fit to the data, followed by the Student-t distribution. In contrast, the multivariate normal model performs the worst.

This result confirms that the data exhibit significant deviations from Gaussian assumptions and require more flexible distributional modelling.

\begin{figure}[H]
    \centering
    \includegraphics[width=0.5\linewidth]{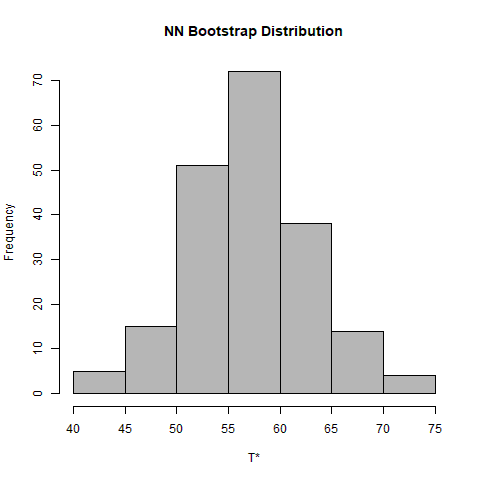}
    \caption{NN Bootstrap Distribution}
    \label{Figure 1.png}
\end{figure}

Figure~\ref{Figure 1.png} shows the bootstrap distribution of the NN test statistic. The observed statistic lies in the extreme tail of the empirical distribution, providing visual support for rejecting multivariate normality.

\begin{figure}[H]
    \centering
    \includegraphics[width=0.5\linewidth]{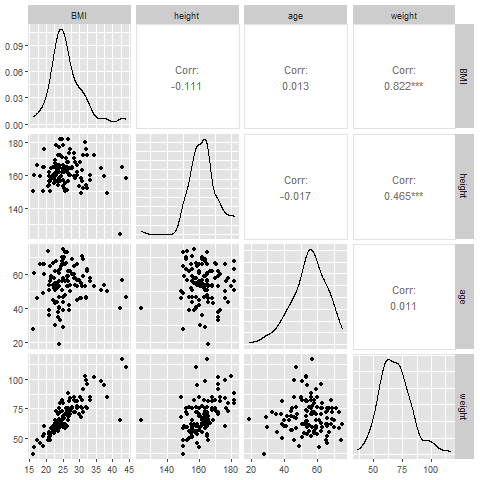}
    \caption{Pairplot}
    \label{fig:placeholder}
\end{figure}
Figure~\ref{fig:placeholder} illustrates the pairwise scatterplots, which reveal nonlinear relationships and potential heterogeneity among variables. In addition, deviations from elliptical patterns can be observed, further indicating violations of the multivariate normality assumptions.

Some variable pairs may also exhibit clustering or skewness, suggesting complex dependence structures.

\begin{figure}[t]
    \centering
    \begin{subfigure}[b]{0.48\linewidth}
        \centering
        \includegraphics[width=\linewidth]{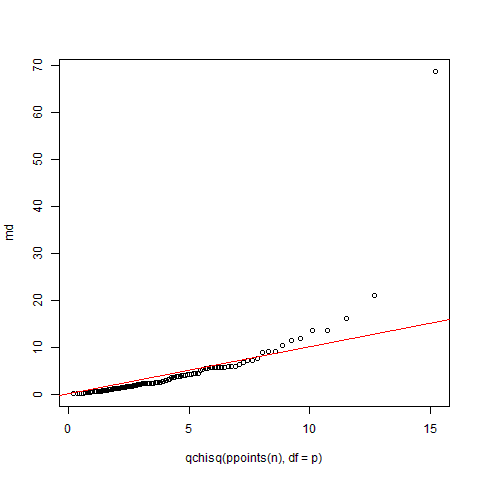}
        \caption{Mahalanobis QQ Plot}
        \label{fig:MahQQ}
    \end{subfigure}
    \hfill
    \begin{subfigure}[b]{0.48\linewidth}
        \centering
        \includegraphics[width=\linewidth]{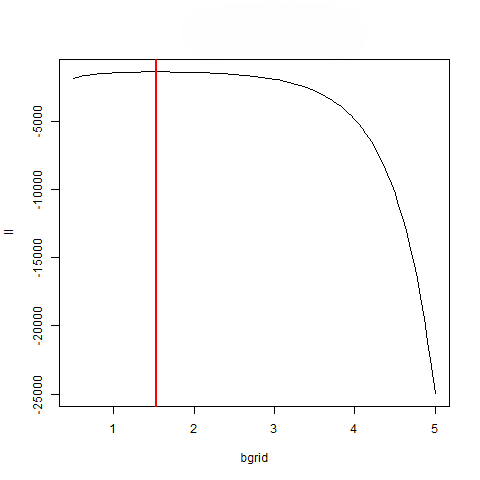}
        \caption{MGGD Likelihood Plot}
        \label{fig:mggd_likelihood}
    \end{subfigure}
    \caption{Diagnostic plots for the fitted model: (a) Mahalanobis QQ plot and (b) MGGD likelihood plot.}
    \label{fig:diagnostic_plots}
\end{figure}
Figure~\ref{fig:MahQQ} presents the Mahalanobis distance Q-Q plot, which shows clear deviations from the theoretical chi-square line. In particular, observations in the upper tail deviate significantly, indicating the presence of outliers or heavy-tailed behaviour.

This provides strong graphical evidence against multivariate normality.

Figure~\ref{fig:mggd_likelihood} displays the likelihood function of the MGGD shape parameter. The likelihood function of the MGGD model exhibits a well-defined maximum around $\beta=1.54$, indicating stable parameter estimation. The shape of the curve suggests that the model is sensitive to the tail behaviour of the data and confirms that a heavy-tailed distribution provides a better fit.

Overall, the results provide strong and consistent evidence that the dataset does not follow a multivariate normal distribution. Both NN and Energy tests reject normality, and graphical diagnostics further support this conclusion. The superior performance of the MGGD model, along with an estimated shape parameter below 2, indicates heavy-tailed behaviour in the data. These findings highlight the importance of using flexible distributional models in multivariate analysis, as reliance on Gaussian assumptions may lead to misleading inferences.
\newpage
\section{Conclusion}
We have developed an affine-invariant nearest neighbour goodness-of-fit test in regimes where the dimension is comparable to or exceeds the sample size for the composite multivariate generalised Gaussian family. Our method avoids density estimation and does not rely on unstable covariance inversion at the raw scale. Rather, we combine robustly adapted null-standardised graphical-based mixing statistics with an adapted parametric bootstrap that estimates nuisance parameters in each replicate. We find this calibration step crucial in composite settings, as it spreads additive uncertainty across the null distribution and allows for reliable control of size in finite samples.

Theoretically, we support the proposed procedure by appealing to the geometric structure of high-dimensional elliptic laws: with the correct specification, both the standardised data and the adapted null reference are asymptotically indistinguishable, whereas with the incorrect specification, their radial behaviours diverge and the NN connections exhibit systematic in-sample bias. Our simulation results support this mechanism. Specifically, under test MGGD nulls, we maintain rejection rates near the nominal level and gain significant power relative to heavy-tailed elliptical alternatives and multivariate $t$-distributions. Remarkably, its power increases with dimension, indicating that the radial profiles exhibit greater separability in high-dimensional geometry.

This framework can be readily applied to other fields where signal processing, remote sensing, and generalised Gaussian modelling are standard, and where model control often constitutes a bottleneck. Natural directions for future work include various extensions, such as a clearer treatment of contamination and mixing alternatives, theoretical improvements based on broader dependency models, and computational acceleration for extremely large $n$ values via approximate nearest-neighbour search.

\bibliographystyle{apalike}
\bibliography{bibliography}

\end{document}